\newcommand{\mcal}{\mathcal}
\newcommand{\msf}{\mathsf}
\newcommand{\R}{{\mathbb R}}
\newcommand{\tr}{\operatorname{tr}}
\newcommand{\tsp}{\tens{T}}
\def\tens#1{\ensuremath{\msf{#1}}}
\newcommand{\are}{\mcal{H}^{n-1}}
\newcommand{\grd}{\nabla}
\newcommand{\gr}[1]{\boldsymbol{#1}}
\newcommand{\wop}{W^{1,p}}
\newcommand{\gl}[1]{\tens{E}(#1)}
\newcommand{\skw}[1]{\operatorname{Skw}(#1)}
\renewcommand{\Omega}{\varOmega}
\renewcommand{\Gamma}{\varGamma}
\begin{document}

\title{A nonlinear Korn inequality based on the Green-Saint Venant
strain  tensor}

\author{Alessandro Musesti}
\institute{A. Musesti \at Dipartimento di Matematica e Fisica ``Niccol\`o Tartaglia'',
Universit\`a Cattolica del Sacro Cuore,
Via dei Musei 41, I-25121 Brescia (Italy)\\
\email{alessandro.musesti@unicatt.it}}

\date{\today}

\maketitle

\begin{abstract}
A nonlinear Korn inequality based on the Green-Saint Venant strain tensor is
proved, whenever the displacement is in the Sobolev space $W^{1,p}$,
$p\geq 2$, under Dirichlet conditions on a part of the boundary. The
inequality can be useful in proving the coercivity of a nonlinear
elastic energy.
\keywords{nonlinear Korn inequality, geometric rigidity lemma, finite
  elasticity, coercivity}
\subclass{74B20 \and 74A05}
\end{abstract}

\section{Introduction}

Korn inequality is one of the pillars of linear elasticity and it is
well-known for more than a century. In the most classical version it
writes
\begin{equation}
\|\grd\gr{u}\|_2\leq c\|e(\gr{u})\|_2
\end{equation}
provided that the displacement $\gr{u}$ vanishes on a sufficiently
large part of the boundary. Here $e(\gr{u})$ denotes the symmetric
part of $\grd\gr{u}$. Since $e(\gr{u})$ is the measure of the strain
in linear theories, Korn inequality provides a control of the deformation
gradient by means of the strain. The $L^p$ version of the inequality,
namely
\begin{equation}
\label{eq:kornp}
\|\grd\gr{u}\|_p\leq c\|e(\gr{u})\|_p
\end{equation}
if $\gr{u}$ vanishes on a part of the boundary,
followed more recently by general results about singular integral
operators, see \cite{John72}. 

However, in nonlinear elasticity a typical measure of the strain is given by
the so-called Green-Saint Venant (or Green-Lagrange) strain tensor
\[
\gl{\gr{u}}=\frac 1 2 (\tens F^\tsp\tens F -\tens
I)=\frac 1 2(\grd\gr{u}+\grd\gr{u}^\tsp+\grd\gr{u}^\tsp\grd\gr{u}),
\]
where $\tens F=\grd\gr{u}$ is the deformation gradient and
$\tens{I}$ denotes the identity tensor (see for instance~\cite[Sect.\
1.8]{Cia88}). Hence for a hyperelastic material it is customary to
write the energy density as a function of $\gl{\gr{u}}$. Unfortunately,
such an assumption, although very reasonable from a mechanical
viewpoint, is a source of difficulties on the mathematical side: both
the weakly lower semicontinuity and the coercivity of the elastic
energy can be a hard task, or even fail to hold.

Specifically, in proving an existence result for a hyperelastic body subject to
external loads and mixed boundary conditions, the Direct Method of the
Calculus of Variations can be a very powerful tool. Therefore, even if
the elastic energy density is well-behaved (for instance if it is
smooth and polyconvex), it is crucial to have some coercivity conditions
in order to construct a {\em bounded} minimizing sequence. Hence it
is very important to control the norm of the displacement by the
energy. Ultimately, one needs to control the displacement by
the Green-Saint Venant tensor, that is, a nonlinear version of the Korn
inequality with boundary conditions. The inequality can be useful also
in other applications: see for instance~\cite{Mar11}, where a similar
tool allows to prove the existence of a minimizer for a
nonlinear elastic problem by the implicit function theorem.

In the present note we prove the inequality
\begin{equation}
\label{eq:nlkorn}
\|\grd\gr{u}\|_p^p\leq c\|\gl{\gr{u}}\|_{p/2}^{p/2}
\end{equation}
in the case of a bounded Lipschitz body and for $\gr{u}$ in the
Sobolev space $W^{1,p}$, $p\geq 2$, and
$\det(\grd\gr{u}(\gr{x})+\tens I)>0$.  The result holds under weak
assumptions on the boundary conditions, namely it is enough for the
displacement $\gr{u}$ to vanish on a part of the boundary with
positive surface measure. We notice that, with a perhaps more simple
notation, equation~\eqref{eq:nlkorn} can be rewritten as
\[
\|\tens F - \tens I\|_p^p\leq C\|\tens F^\tsp\tens F -\tens
I\|_{p/2}^{p/2},\qquad p\geq 2,\ \det\tens F>0.
\]
Interestingly, as observed by the anonymous reviewer,
\eqref{eq:kornp} does not follow from~\eqref{eq:nlkorn} by
linearization, since the exponent on the right-hand side is $p/2$.

A similar inequality has been proved in~\cite{CiaMar04} for the case
$p=2$. Here we generalize that result to the case $p\geq 2$. An
essential ingredient of the proof is the $L^p$ version of the
celebrated {\em geometric rigidity lemma} of~\cite{FJM02}, originally
stated for $L^2$, which has been generalized for $1<p<\infty$ in Conti
\& Schweizer~\cite[Sec.\ 2.4]{ConSch06}.

\section{Notation}

Given an $n\times n$-matrix $\tens F$, we will denote by $|\tens F|$ its
Frobenius norm, that is
\[
|\tens F|=\sqrt{\displaystyle\tr(\tens{F}^\tsp\tens{F})}=\sum_{i,j=1}^n F_{ij}^2,
\]
where $\tens{F}^\tsp$ denotes the transpose of the matrix $\tens{F}$.
We will denote by $SO(n)$ the set of rotations in $\R^n$, that is 
\[
\tens{R}\in SO(n)\quad\Leftrightarrow\quad
\tens{R}^\tsp\tens{R}=\tens{I}\text{\ and\ }\det\tens{R}=1,
\]

Let $\Omega$ be a bounded connected open subset of $\R^n$, $n\geq 2$,
with Lipschitz boundary. Concerning the displacement, we assume that
$\gr{u}\in\wop(\Omega;\R^n)$ with $2\leq p<\infty$, endowed with the
usual norm
\[
\|\gr{u}\|_{1,p}=\left(\|\gr{u}\|_{p}^p+\|\grd\gr{u}\|_{p}^p\right)^{1/p}. 
\]
We will consider homogeneous Dirichlet boundary conditions on a part of the
boundary: let $\Gamma\subset\partial\Omega$ be a subset of the
boundary of $\Omega$ such that
\[
|\Gamma|:=\are(\Gamma)>0,
\]
where $\are$ denotes the $(n-1)$-dimensional Hausdorff
measure. We introduce the set
\[
\wop_\Gamma:=\left\{\gr{u}\in\wop(\Omega;\R^n):\ \gr{u}(\gr{x})=\gr{0} 
\quad\text{$\are$-a.e.\ in $\Gamma$}\right\}
\]
endowed with the norm $\|\cdot\|_{1,p}$, where the equality has to be
understood in the sense of traces of Sobolev functions. Notice that
$\wop_\Gamma$ is a closed linear subspace of $\wop(\Omega;\R^n)$.

A crucial lemma is the following (see also~\cite[Lemma 1]{CiaMar04}).
\begin{lemma}
\label{lem:ciamar}
There is a constant $c>0$ such that
\begin{equation}
\label{ciamar}
\operatorname{dist}(\tens F,SO(n))\leq c|\tens F^\tsp\tens F-\tens
I|^{1/2}
\end{equation}
for any matrix $\tens F$ with $\det \tens F>0$.
\end{lemma}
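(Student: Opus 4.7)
The plan is to apply the polar decomposition in order to reduce the matrix inequality to a one-dimensional estimate for each eigenvalue of the stretch tensor. Since $\det\tens F>0$, we may write $\tens F=\tens R\tens U$ with $\tens R\in SO(n)$ and $\tens U$ symmetric positive definite, so that $\tens F^\tsp\tens F=\tens U^2$. The orthogonal invariance of the Frobenius norm then yields
\[
\operatorname{dist}(\tens F,SO(n))\le |\tens F-\tens R|=|\tens R(\tens U-\tens I)|=|\tens U-\tens I|,
\]
so the claim will follow once we prove $|\tens U-\tens I|^2\le c^2|\tens U^2-\tens I|$ for every symmetric positive definite $\tens U$.

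Next I would diagonalize $\tens U$ by an orthogonal conjugation, obtaining its positive eigenvalues $\lambda_1,\dots,\lambda_n$. A further use of orthogonal invariance of the Frobenius norm reduces the claim to the scalar inequality
\[
\sum_{i=1}^n(\lambda_i-1)^2\le c^2\Bigl(\sum_{i=1}^n(\lambda_i^2-1)^2\Bigr)^{1/2}.
\]
The crucial pointwise observation is that, for $\lambda>0$, one has $|\lambda-1|\le\lambda+1$, which combined with the factorization $|\lambda^2-1|=|\lambda-1|(\lambda+1)$ gives
\[
(\lambda-1)^2\le|\lambda^2-1|.
\]
Summing over $i$ and then invoking Cauchy--Schwarz in $\R^n$ to pass from the $\ell^1$-norm to the $\ell^2$-norm of the vector $(\lambda_i^2-1)_i$ supplies the factor $\sqrt n$ needed on the right-hand side, yielding the estimate with $c=n^{1/4}$.

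The only step that requires any real thought is the pointwise inequality $(\lambda-1)^2\le|\lambda^2-1|$: it is here that the hypothesis $\det\tens F>0$ plays its essential role, since it is precisely what forces the eigenvalues of $\tens U$ to be strictly positive. Without that positivity the bound would fail (for example at $\lambda=-1$, where the left-hand side equals $4$ while the right-hand side vanishes), and no inequality of the form~\eqref{ciamar} could hold. All the remaining ingredients — polar decomposition, orthogonal diagonalization, and Cauchy--Schwarz — are entirely standard.
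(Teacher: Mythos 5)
Your proof is correct and takes essentially the same route as the paper: polar decomposition reduces the estimate to the modulus of continuity of the square root at the level of the singular values, and the key scalar inequality $(\lambda-1)^2\leq|\lambda^2-1|$ for $\lambda>0$ is exactly the one used in the paper. The only cosmetic difference is in the last step: the paper passes through the spectral norm and invokes equivalence of norms (yielding an implicit constant), while you stay with the Frobenius norm and apply Cauchy--Schwarz, which gives the explicit and rather clean constant $c=n^{1/4}$.
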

\begin{proof}
Since $\det\tens{F}>0$, it is a well-known consequence of the Polar
Factorization Theorem (see~\cite[Theorem 3.2-2]{Cia88}) that
\[
\operatorname{dist}(\tens F,SO(n))=|(\tens F^\tsp\tens
F)^{1/2}-\tens{I}|.
\]
Moreover, denoting by $v_1,\dots,v_n$ the eigenvalues of the matrix
$(\tens{F}^\tsp\tens F)^{1/2}$, by the equivalence of the Frobenius
and the spectral norm it follows that
\[
c_1\max_{1\leq i\leq n}|v_i-1|\leq
|(\tens F^\tsp\tens F)^{1/2}-\tens I|\leq 
c_2\max_{1\leq i\leq n} |v_i-1|
\]
for some $c_1,c_2>0$. 
Since $v_1,\dots,v_n\geq 0$, one has
\[
|(\tens F^\tsp\tens F)^{1/2}-\tens I|\leq c_2\max_{1\leq i\leq n} |v_i-1|\leq
c_2\max_{1\leq i\leq n} |v_i^2-1|^{1/2}=\frac{c_2}{c_1}|\tens F^\tsp\tens F-\tens
I|^{1/2}
\]
and the proof is complete.
\end{proof}

\section{The nonlinear Korn inequality}

We begin by proving the following lemma, which can also be found
in~\cite[Lemma 3.3]{DalNegPer02}.
\begin{lemma}
\label{lem:dal}
Let $\overline{K}$ denote the closure of the cone
\[
K=\Big\{t(\tens{I}-\tens{R}):\ t>0,\ \tens{R}\in SO(n)\Big\}.
\]
Then there exists $c>0$ such that
\[
\forall \tens{F}\in \overline{K}:\quad
|\tens F|\leq c \min_{\gr{z}\in\R^n}\left(\int_\Gamma |\tens{F}\gr{x}-\gr{z}|^2
\,d\are(\gr{x})\right)^{1/2}.
\]
\end{lemma}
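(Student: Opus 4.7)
The plan is a compactness-and-homogeneity argument. Set
\[
N(\tens{F}) := \min_{\gr{z}\in\R^n}\left(\int_\Gamma |\tens{F}\gr{x}-\gr{z}|^2\,d\are(\gr{x})\right)^{1/2}.
\]
The minimum is attained at the mean $\gr{z}^*(\tens{F})=|\Gamma|^{-1}\int_\Gamma \tens{F}\gr{x}\,d\are$, so $N(\tens{F})^2$ is a continuous quadratic form in $\tens{F}$. Both $N$ and $|\cdot|$ are positively $1$-homogeneous, so the desired inequality is equivalent to the strict positivity of $m := \min\{N(\tens{F}) : \tens{F}\in\overline{K},\ |\tens{F}|=1\}$. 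This minimum is attained by compactness of $\overline{K}\cap\{|\tens{F}|=1\}$ in the finite-dimensional space of matrices, so it suffices to rule out $m=0$.

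If $m=0$, any minimizer $\tens{F}_*$ satisfies $\tens{F}_*\gr{x}=\gr{z}_*$ for $\are$-a.e.\ $\gr{x}\in\Gamma$, whence $\tens{F}_*(\gr{x}-\gr{y})=\gr{0}$ for $(\gr{x},\gr{y})$ in a full-measure subset of $\Gamma\times\Gamma$. Since $\are(\Gamma)>0$, the set $\Gamma$ cannot be contained in any affine subspace of $\R^n$ of dimension $\leq n-2$ (such subspaces have zero $(n-1)$-Hausdorff measure), so the differences $\gr{x}-\gr{y}$ span a linear subspace of $\ker\tens{F}_*$ of dimension at least $n-1$; consequently $\operatorname{rank}(\tens{F}_*)\leq 1$.

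The contradiction comes from the structural fact that every element of $\overline{K}$ has \emph{even} rank. Indeed, for $\tens{F}=t(\tens{I}-\tens{R})\in K$ with $\tens{R}\in SO(n)$, the spectrum of $\tens{R}$ consists of the eigenvalue $+1$ on its fixed subspace plus conjugate pairs $e^{\pm i\theta_j}$, so the non-zero eigenvalues of $\tens{I}-\tens{R}$ occur in pairs and $\operatorname{rank}(\tens{F})$ is even. For a limit $\tens{F}_j=t_j(\tens{I}-\tens{R}_j)\to\tens{F}_\infty$, either $t_j$ stays bounded, in which case a subsequence converges back to an element of $K$, or $t_j\to\infty$; in the latter case boundedness of $\tens{F}_j$ forces $\tens{R}_j\to\tens{I}$ and, writing $\tens{R}_j=\exp(\tens{A}_j)$ with $\tens{A}_j$ skew, the expansion $\tens{I}-\exp(\tens{A}_j)=-\tens{A}_j+O(|\tens{A}_j|^2)$ identifies $\tens{F}_\infty=\lim(-t_j\tens{A}_j)$ as skew-symmetric, and skew matrices also have even rank. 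Combined with $\operatorname{rank}(\tens{F}_*)\leq 1$, this forces $\tens{F}_*=\gr{0}$, contradicting $|\tens{F}_*|=1$.

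I expect the parity step to be the main obstacle: one must identify the closure $\overline{K}$ carefully and verify that no rank-one element is created when passing to the limit, and in particular that the $t\to\infty$ limit produces only skew matrices (still of even rank). Everything else, including the continuity and homogeneity of $N$ and the dimensional argument on $\Gamma$, is essentially routine.
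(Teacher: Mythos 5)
Your proof is correct and follows essentially the same route as the paper: both arguments normalize by homogeneity, pass to a minimizer $\tens{F}_*$ on $\overline{K}\cap\{|\tens F|=1\}$, identify the closure as $K\cup\operatorname{Skw}(n)$ via the exponential parametrization of $SO(n)$, use positivity of $\are(\Gamma)$ to force $\dim\ker\tens{F}_*\geq n-1$, and contradict the low rank with the spectral pairing of $\tens{I}-\tens{R}$ and of skew matrices. The only cosmetic difference is that you phrase the structural obstruction as ``elements of $\overline{K}$ have even rank,'' whereas the paper states the equivalent bound $\dim\ker\tens{F}\leq n-2$ for nonzero $\tens{F}\in\overline{K}$.
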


\begin{proof}
First of all we notice that 
\begin{equation}
\label{eq:cone}
\overline{K}=K\cup \skw{n},
\end{equation}
where $\skw{n}$ denotes the set of 
skew-symmetric tensors of order $n$.
Indeed, assume that $\tens{F}\in\overline{K}\setminus K$ with
$\tens{F}\neq\tens{0}$; then there is an
unbounded sequence $(t_h)$ with $t_h>0$ such that 
\[
t_h(\tens{I}-\tens{R}_h)\to \tens{F}\quad\text{as $h\to\infty$},
\]
where $\tens{R}_h\in SO(n)$. In particular, $|\tens{I}-\tens{R}_h|\sim
1/t_h$ as $h\to\infty$. Representing an orthogonal tensor as
the exponential of a skew-symmetric tensor, we can write
\[
\tens{R}_h=\exp\left(t_h^{-1} \tens{A}\right)+o(t_h^{-1}) 
\quad\text{as $h\to\infty$},
\]
for some $\tens{A}\in\skw{n}$. Then we have
\[
\tens{F}=\lim_{h\to\infty}t_h\left[\tens{I}-\exp\left(t_h^{-1}
  \tens{A}\right)-o(t_h^{-1}) \right]=
-\frac{d}{dt}\exp(t\tens{A})\Big|_{t=0}=-\tens{A},
\]
hence $\tens{F}$ is skew-symmetric.

By~\eqref{eq:cone} it follows that 
\[
\tens{F}\in\overline{K},\, \tens{F}\neq
0\quad\Rightarrow\quad\dim\ker\tens{F}\leq n-2.
\]
Moreover it is easy to check that
\[
\int_\Gamma |\tens{F}\gr{x}-\gr{z}|^2\,d\are(\gr{x})\quad
\text{has minimum for}\quad \overline{\gr{z}}=\frac{1}{|\Gamma|}\int_\Gamma
\tens{F}\gr{x}\,d\are(\gr{x}).
\]

Now, assume by contradiction that for any $j\geq 1$ there exists
$\tens{F}_j\in \overline{K}$ such that $|\tens{F}_j|=1$ and
\[
\frac 1 j >\int_\Gamma|\tens{F}_j\gr{x}-\overline{\gr{z}}_j|^2\,d\are(\gr{x}),
\quad \overline{\gr{z}}_j=\frac{1}{|\Gamma|}\int_\Gamma
\tens{F}_j\gr{x}\,d\are(\gr{x}).
\]
Then there exists $\tens{F}\in \overline{K}$ such that $|\tens{F}|=1$ and
$\tens{F}_j\to \tens{F}$ up to a subsequence. By continuity it follows
that 
\[
\int_\Gamma|\tens{F}\gr{x}-\overline{\gr{z}}|^2\,d\are(\gr{x})=0,
\quad \overline{\gr{z}}=\frac{1}{|\Gamma|}\int_\Gamma
\tens{F}\gr{x}\,d\are(\gr{x}),
\]
whence $\tens{F}\gr{x}=\overline{\gr{z}}$ for $\are$-a.e. $\gr{x}\in\Gamma$.

Since $\Gamma$ has positive measure and $\Omega$ is Lipschitz, then
there exist at least $n-1$ pairs $(\gr{x}_i,\gr{y}_i)$ such that
$\gr{x}_i,\gr{y}_i\in\Gamma$ and $(\gr{x}_i-\gr{y}_i)$ are linearly independent. In
particular, $\tens{F}(\gr{x}_i-\gr{y}_i)=0$, hence $\dim\ker\tens{F}\geq n-1$, a
contradiction.
\end{proof}

Now we are ready to state and prove the main theorem.
\begin{theorem}
\label{thm:nonlK}
Let $\Omega$ be a bounded connected open subset of $\R^n$ with
Lipschitz boundary and $2\leq p <\infty$. Let
$\Gamma\subset\partial\Omega$ be a subset of the boundary of $\Omega$
such that $\are(\Gamma)>0$.
 
Then there exists a constant $c>0$ such that
\[
\|\grd\gr{u}\|_p^p\leq c\|\gl{\gr{u}}\|_{p/2}^{p/2}
\]
for every $\gr{u}\in W^{1,p}_\Gamma$ such that
$\det(\grd\gr{u}(\gr{x})+\tens I)>0$ for a.e.\ $\gr{x}\in\Omega$.
\end{theorem}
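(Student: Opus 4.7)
The plan is to combine the $L^p$ geometric rigidity lemma of~\cite{ConSch06} with Lemmas~\ref{lem:ciamar} and~\ref{lem:dal}. Setting $\tens{F}=\grd\gr{u}+\tens{I}$, so that $\grd\gr{u}=\tens{F}-\tens{I}$ and $\gl{\gr{u}}=\frac{1}{2}(\tens{F}^\tsp\tens{F}-\tens{I})$, and noting that $\det\tens{F}>0$ a.e.\ by hypothesis, I first apply geometric rigidity to the map $\gr{x}\mapsto\gr{x}+\gr{u}(\gr{x})$ (whose gradient is $\tens{F}$) to obtain a single constant rotation $\tens{R}\in SO(n)$, depending on $\gr{u}$, such that $\|\tens{F}-\tens{R}\|_p\leq C\,\|\operatorname{dist}(\tens{F},SO(n))\|_p$. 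Raising to the $p$-th power and inserting Lemma~\ref{lem:ciamar} pointwise yields
\[
\|\tens{F}-\tens{R}\|_p^p\leq C\int_\Omega|\tens{F}^\tsp\tens{F}-\tens{I}|^{p/2}\,d\gr{x}=C'\,\|\gl{\gr{u}}\|_{p/2}^{p/2}.
\]

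The triangle inequality reduces the problem to controlling $|\tens{I}-\tens{R}|$ by $\|\tens{F}-\tens{R}\|_p$, since
\[
\|\grd\gr{u}\|_p=\|\tens{F}-\tens{I}\|_p\leq\|\tens{F}-\tens{R}\|_p+|\Omega|^{1/p}\,|\tens{I}-\tens{R}|.
\]
Here I exploit the Dirichlet condition: define $\gr{w}(\gr{x}):=\gr{u}(\gr{x})+(\tens{I}-\tens{R})\gr{x}$, so that $\grd\gr{w}=\tens{F}-\tens{R}$ and, since $\gr{u}=\gr{0}$ $\are$-a.e.\ on $\Gamma$, the trace of $\gr{w}$ on $\Gamma$ coincides with $(\tens{I}-\tens{R})\gr{x}$. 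Since $\tens{I}-\tens{R}\in K\subseteq\overline{K}$, testing Lemma~\ref{lem:dal} with $\gr{z}:=\overline{\gr{w}}=|\Omega|^{-1}\int_\Omega\gr{w}\,d\gr{x}$ gives
\[
|\tens{I}-\tens{R}|\leq c\biggl(\int_\Gamma|\gr{w}-\overline{\gr{w}}|^2\,d\are\biggr)^{1/2}.
\]

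Finally, since $p\geq 2$ and $|\Gamma|<\infty$, H\"older's inequality upgrades this to an $L^p$ bound on $\Gamma$, while the trace theorem on the bounded Lipschitz domain $\Omega$ combined with the Poincar\'e--Wirtinger inequality give
\[
\biggl(\int_\Gamma|\gr{w}-\overline{\gr{w}}|^p\,d\are\biggr)^{1/p}\leq C\,\|\gr{w}-\overline{\gr{w}}\|_{\wop(\Omega;\R^n)}\leq C'\|\grd\gr{w}\|_p=C'\|\tens{F}-\tens{R}\|_p.
\]
Chaining these estimates yields $|\tens{I}-\tens{R}|\leq C\|\tens{F}-\tens{R}\|_p$, whence $\|\grd\gr{u}\|_p^p\leq C\|\tens{F}-\tens{R}\|_p^p\leq C''\|\gl{\gr{u}}\|_{p/2}^{p/2}$, which is the claim. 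I expect the main obstacle to be this middle step: the rotation returned by the rigidity lemma is inherently arbitrary, and without additional information one cannot hope to control its distance from $\tens{I}$. Lemma~\ref{lem:dal}, together with the assumption $\are(\Gamma)>0$, is exactly the mechanism that converts the vanishing boundary trace of $\gr{u}$ on $\Gamma$ into a quantitative bound on $|\tens{I}-\tens{R}|$.
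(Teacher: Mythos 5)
Your proof is correct and follows essentially the same route as the paper's: the same three ingredients (the $L^p$ geometric rigidity lemma combined with Lemma~\ref{lem:ciamar}, the cone estimate of Lemma~\ref{lem:dal} applied to $\tens{I}-\tens{R}$, and the trace theorem with Poincar\'e--Wirtinger plus H\"older to pass from $L^2(\Gamma)$ to $L^p(\Gamma)$ to $L^p(\Omega)$) are used in the same roles. The only cosmetic difference is that you invoke geometric rigidity first to fix a particular $\tens{R}$ and then bound $|\tens{I}-\tens{R}|$, whereas the paper establishes $\|\grd\gr\phi-\tens{I}\|_p\leq c\|\grd\gr\phi-\tens{R}\|_p$ uniformly over all $\tens{R}\in SO(n)$ before applying rigidity; your $\gr{w}$ is exactly the paper's $\gr\phi-\tens{R}\gr{x}$ and your $\overline{\gr{w}}$ is its $\overline{\gr{z}}_{\tens R}$.
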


\begin{proof} 
The continuity of the trace operator gives
\begin{equation}
\label{eq:conttr}
\int_\Gamma |f(\gr{x})|^p\,d\are(\gr{x})\leq c_1\int_\Omega (|f(\gr{x})|^p+|\grd f|^p)\,d\gr{x}
\end{equation}
for every $f\in\wop(\Omega)$, where $c_1>0$ is some constant depending
on $\Omega$ and $\Gamma$.

Let $\gr{u}\in\wop_\Gamma$ and consider the vector field
$\gr{\phi}\in\wop(\Omega;\R^n)$ defined as $\gr{\phi}(\gr{x})=\gr{u}(\gr{x})+\gr{x}$.
For every $\tens R\in SO(n)$ define
\[
\overline{\gr{z}}_{\tens R}:=\frac{1}{|\Omega|}\int_\Omega (\gr\phi(\gr{x})-\tens R \gr{x})\,d\gr{x}.
\]
By~\eqref{eq:conttr} and Poincar\'e-Wirtinger inequality one has
\begin{equation}
\label{eq:pw}
\int_\Gamma |\gr\phi(\gr{x})-\tens R \gr{x} -\overline{\gr{z}}_{\tens R}|^p\,d\are(\gr{x})
\leq c_2 \int_\Omega |\grd\gr\phi(\gr{x})-\tens R|^p\,d\gr{x}.
\end{equation}
Notice that in the left-hand side one can replace $\gr\phi(\gr{x})$ with
the identity, in view of the boundary conditions. Moreover, by Lemma~\ref{lem:dal} it
follows that
\begin{equation}
\label{eq:ADD}
|\tens I-\tens R|\leq c_3 \left(\min_{\gr{z}\in\R^n}\int_\Gamma|
(\tens I-\tens R)\gr{x}-\gr{z}|^2\,d\are(\gr{x})\right)^{1/2}.
\end{equation}
In particular, choosing $\gr{z}=\overline{\gr{z}}_{\tens R}$, by
H\"older inequality (with $p\geq 2$) and~\eqref{eq:pw} it follows that
\[
|\tens I-\tens R|^p\leq c_3^p|\Gamma|^{\frac p 2-1}
\int_\Gamma|(\tens I-\tens R)\gr{x}-\overline{\gr{z}}_{\tens R}|^p\,d\are(\gr{x})
\leq c_4\int_\Omega |\grd\gr\phi(\gr{x})-\tens R|^p\,d\gr{x},
\]
hence
\[
\|\tens I-\tens R\|_p^p=|\Omega||\tens I-\tens R|^p\leq
c_5 \|\grd\gr\phi-\tens R\|_p^p.
\]
Then one has
\begin{equation}
\label{eq:boundR}
\|\grd\gr{\phi}-\tens I\|_p\leq
\|\grd\gr{\phi}-\tens R\|_p+\|\tens I-\tens R\|_p\leq
c_6\|\grd\gr{\phi}-\tens R\|_p
\end{equation}
for every $\tens R\in SO(n)$.

Now we need the celebrated
\emph{geometric rigidity lemma} by Friesecke, James \&
M{\"u}ller~\cite{FJM02}, which holds for $1<p<\infty$, as pointed out
by Conti \& Schweizer~\cite[Sec.\ 2.4]{ConSch06}:

\noindent\emph{there is a constant $K>0$ such that
\[
\min_{\tens R\in SO(n)} \|\grd\gr{\phi}-\tens R\|_p^p\leq K\int_\Omega
\operatorname{dist}^p(\grd\gr{\phi}(\gr{x}),SO(n))\,d\gr{x} 
\] 
for any $\gr{\phi}\in \wop(\Omega;\R^n)$.} 

By assumption, $\det\grd\gr{\phi}>0$ a.e.\ in $\Omega$, hence,
combining Lemma~\ref{lem:ciamar} with the geometric rigidity lemma it
follows that there exists $\tens R\in SO(n)$ such that
\[
\|\grd\gr{\phi}-\tens R\|_p^p\leq cK\int_\Omega
|\tens \grd\gr{\phi}(\gr{x})^\tsp\grd\gr{\phi}(\gr{x})-\tens I|^{p/2}\,d\gr{x}=
c_7\|\tens \grd\gr{\phi}^\tsp\grd\gr{\phi}-\tens I\|_{p/2}^{p/2}
\] 
and, by~\eqref{eq:boundR},
\[
\|\grd\gr{u}\|_p^p=\|\grd\gr{\phi}-\tens I\|_p^p\leq
c_6^p\|\grd\gr{\phi}-\tens R\|_p^p\leq
c_6^pc_7\|\tens \grd\gr{\phi}^\tsp\grd\gr{\phi}-\tens I\|_{p/2}^{p/2}
=c\|\gl{\gr{u}}\|_{p/2}^{p/2},
\]
which concludes the proof.
\end{proof}

\begin{remark}
In proving the previous theorem, we also proved the following consequence
of the $L^p$ version of the geometric rigidity lemma:

\noindent\emph{there is a constant $C>0$ such that
\[
\|\grd\gr{\phi}-\tens{I}\|_p\leq 
C\|\operatorname{dist}(\grd\gr{\phi},SO(n))\|_p 
\] 
for every $\gr{\phi}\in \wop(\Omega;\R^n)$ with $\gr{\phi}(\gr{x})=\gr{x}$ on $\Gamma$.} 
\end{remark}

\begin{acknowledgements}
  This research is partially supported by GNFM (Gruppo Nazionale di
  Fisica Matematica) of INdAM (Istituto
  Nazionale di Alta Matematica).
\end{acknowledgements}

\end{document}